\newtheorem{lemma}{\textbf{Lemma}}
\newtheorem{theorem}{\textbf{Theorem}}
\newtheorem{corollary}{\textbf{Corollary}}
\newtheorem{remark}{Remark}
\begin{document}

\title{Achievable Throughput of Multi-mode Multiuser MIMO with Imperfect CSI Constraints}
\author{\authorblockN{Jun~Zhang, Marios Kountouris, Jeffrey~G.~Andrews and Robert~W.~Heath~Jr.}
\authorblockA{Wireless Networking and Communications Group\\
Department of Electrical and Computer Engineering\\
The University of Texas at Austin, Austin, TX 78712-0240\\
Email: \{jzhang2, mkountouris, jandrews, rheath\}@ece.utexas.edu}}

\maketitle

\begin{abstract}
For the multiple-input multiple-output (MIMO) broadcast channel with
imperfect channel state information (CSI), neither the capacity nor
the optimal transmission technique have been fully discovered. In
this paper, we derive achievable ergodic rates for a MIMO fading
broadcast channel when CSI is delayed and quantized. It is shown
that we should not support too many users with spatial division
multiplexing due to the residual inter-user interference caused by
imperfect CSI. Based on the derived achievable rates, we propose a
multi-mode transmission strategy to maximize the throughput, which
adaptively adjusts the number of active users based on the channel
statistics information.
\end{abstract}

\section{Introduction}
For the multiple-input multiple-output (MIMO) broadcast channel,
channel state information at the transmitter (CSIT) is required to
separate the spatial channels for different users and achieve the
full spatial multiplexing gain. CSIT, however, is difficult to get
and is never perfect. Neither the capacity nor the optimal
transmission technique have been fully discovered. Linear precoding
combined with limited feedback \cite{LovHea08JSAC} is a practical
option, which has drawn lots of interest recently
\cite{Jin06IT,CaiJin07Submit,DinLov07Tsp,YooJin07JSAC}. The main
finding is that the full spatial multiplexing gain can be obtained
with carefully designed feedback strategy and sufficient feedback
bits that grow linearly with signal-to-noise ratio (SNR) (in dB) and
the number of transmit antennas.

In most systems, the number of feedback bits per user is fixed. In
addition, there are other CSIT imperfections, such as estimation
error and feedback delay. All of these make the system throughput
limited by the residual inter-user interference at high SNR
\cite{LapSha05Allerton}. A simple approach to solve this problem is
to adaptively switch between the single-user (SU) and multi-user
(MU) modes, as the SU mode does not suffer from the residual
interference at high SNR. SU/MU mode switching algorithms for the
random beamforming system were proposed in
\cite{YeuPar07Accept,KouGes07SPAWC}, where each user feeds back its
preferred mode and the channel quality information. Mode switching
for systems with zero-forcing (ZF) precoding and limited feedback
was investigated in \cite{KouFra07ICASSP,TriBoc07VTC}, where the
switching is performed during the scheduling process with properly
designed channel quality information feedback.

The above mentioned SU/MU mode switching algorithms are based on
instantaneous CSIT, and require feedback from each user in each time
slot. In \cite{ZhaAnd08Allerton,ZhaAnd}, a SU/MU mode switching
algorithm was proposed for the system with delayed and quantized
CSIT. The mode switching is based on the statistics of the channel
information, including the average SNR, the normalized Doppler
frequency, and the codebook size, which are easily available at the
transmitter. But it only switches between the SU mode and the full
MU mode that serves the maximum number of users that can be
supported, i.e. it is a dual-mode switching strategy.

In this paper, we consider a MIMO broadcast channel with delayed and
quantized CSIT, with the amount of delay and the size of the
quantization codebook fixed. We derive an achievable ergodic rate
for each \emph{transmission mode}, denoting the number of users
served by spatial division multiplexing. It is shown that the number
of active users is related to the transmit array gain, spatial
multiplexing gain, and the residual inter-user interference. The
full MU mode normally should not be activated, as it suffers from
the highest interference while provides no array gain. A multi-mode
transmission strategy is proposed to adaptively select the active
mode to maximize the throughput.


\section{System Model}\label{Sec:SysMod}
We consider a MIMO broadcast channel with $N_t$ antennas at the
transmitter and $U$ single-antenna receivers. Each time slot, the
transmitter determines the number of users to be served, denoted as
\emph{the transmission mode M}, $1\leq{M}\leq{N_t}$.
Eigen-beamforming is applied for the SU mode ($M=1$), which is
optimal with perfect CSIT. ZF precoding is used for the MU mode
($1<M\leq{N_t}$), as it is possible to derive closed-form results
due to its simple structure, and it is optimal among the set of all
linear precoders at asymptotically high SNR \cite{Jin05ISIT}. The
discrete-time complex baseband received signal at the $u$-th user in
mode $M$ at time $n$ is given as
\begin{equation}
{y}_u[n]=\mathbf{h}^*_u[n]\sum_{u'=1}^{M}\mathbf{f}_{u'}[n]{x}_{u'}[n]+{z}_u[n],
\end{equation}
where $\mathbf{h}_u[n]$ is the channel vector for the $u$-th user,
and ${z}_u[n]$ is the normalized complex additive Gaussian noise,
$z_u[n]\sim\mathcal{CN}(0,1)$. ${x}_u[n]$ and $\mathbf{f}_u[n]$ are
the transmit signal and precoding vector for the $u$-th user. The
transmit power constraint is
$\mathbb{E}\left[\sum_{u=1}^M|x_u[n]|^2\right]=P$, and we assume
equal power allocation among different users. As the noise is
normalized, $P$ is also the average SNR.

To assist the analysis, we assume that the channel $\mathbf{h}_u[n]$
is well modeled as a spatially white Gaussian channel, with entries
${h}_{i}[n]\sim\mathcal{CN}(0,1)$. We assume perfect CSI at the
receiver and the transmitter obtains CSI through limited feedback.
In addition, there is delay in the available CSIT. The models for
delay and limited feedback are presented as follows.

\subsection{CSI Delay Model}
We consider a stationary ergodic Gauss-Markov block fading regular
process (or auto regressive model of order $1$), where the channel
stays constant for a symbol duration and changes from symbol to
symbol according to
\begin{equation}\label{ChDelay}
\mathbf{h}[n]=\rho{\mathbf{h}}[n-1]+\mathbf{e}[n],
\end{equation}
where $\mathbf{e}[n]$ is the channel error vector, with i.i.d.
entries $e_{i}[n]\sim\mathcal{CN}(0,\epsilon_e^2)$, and it is
uncorrelated with $\mathbf{h}[n-1]$ and i.i.d. in time. We assume
the CSI delay is of one symbol. For the numerical analysis, the
classical Clarke's isotropic scattering model will be used as an
example, for which the correlation coefficient is
$\rho=J_0(2\pi{f_d}T_s)$ with Doppler spread $f_d$ \cite{Cla68},
where $T_s$ is the symbol duration and $J_0(\cdot)$ is the zero-th
order Bessel function of the first kind. The variance of the error
vector is $\epsilon_e^2=1-\rho^2$. The value $f_dT_s$ is the
normalized Doppler frequency. Note that this model can also be used
to model the estimation or prediction error, which makes our results
more general.

\subsection{Channel Quantization}
The channel direction information is fed back using a quantization
codebook known at both the transmitter and receiver. The
quantization is chosen from a codebook of unit norm vectors of size
$L=2^B$. Each user uses a different codebook to avoid the same
quantization vector. The codebook for user $u$ is
$\mathcal{C}_u=\{\mathbf{c}_{u,1},\mathbf{c}_{u,2},\cdots,\mathbf{c}_{u,L}\}$.
Each user quantizes its channel direction to the closest codeword,
and the closeness is measured by the inner product. Therefore, the
index of channel for user $u$ is
\begin{equation}
I_u=\arg\max_{1\leq{\ell}\leq{L}}|\tilde{\mathbf{h}}_u^*\mathbf{c}_{u,\ell}|,
\end{equation}
where $\tilde{\mathbf{h}}_u=\frac{\mathbf{h}_u}{\|\mathbf{h}_u\|}$
is the channel direction. Random vector quantization (RVQ) is used
to facilitate the analysis, where each quantization vector is
independently chosen from the isotropic distribution on the
$N_t$-dimensional unit sphere. The codebook based on RVQ is
asymptotically optimal in probability as $N_t$,
$B\rightarrow\infty$, with
$\frac{B}{N_t}\rightarrow\hat{r}\in\mathbb{R}^+$ \cite{DaiLiu08IT}.

\subsection{Multi-mode Transmission}
In this paper, we consider a homogeneous network where all the users
have the same average SNR, mobility, delay and feedback bits. The
transmitter will determine how many users to serve, i.e. the active
mode $M^\star$, and then $M^\star$ users are selected from the total
$U$ users randomly or with round-robin scheduling. The mode
selection is based on the information including average SNR,
normalized Doppler frequency, and the quantization codebook size.
Only the selected $M^\star$ users need to feed back their
instantaneous channel information. Such user scheduling is suitable
for the system with scheduling independent of the channel status,
such as round-robin or the one based on the queue length, or the
small system with user number roughly equal to the transmit
antennas, such as in a cooperative communication network.

\section{Throughput Analysis and Mode Selection}\label{Sec:MultiMode}
In this section, we derive the achievable ergodic rate for each
mode, based on which the active mode can be selected. Both perfect
and imperfect CSIT systems are investigated.

\subsection{Perfect CSIT}

\subsubsection{SU Mode (Eigen-beamforming), $M=1$}
With perfect CSIT, the beamforming (BF) vector is the channel
direction vector, $\mathbf{f}[n]=\tilde{\mathbf{h}}[n]$. The ergodic
achievable rate is
\begin{align}\label{C_BF}
R_{CSIT}(1)&=R_{BF}(\gamma,N_t)\notag\\
&=\mathbb{E}_{\mathbf{h}}\left[\log_2\left(1+\gamma|\mathbf{h}^*[n]\mathbf{f}[n]|^2\right)\right]\notag\\
&=\log_2(e)e^{1/\gamma}\sum_{k=0}^{N_t-1}\frac{\Gamma(-k,1/\gamma)}{\gamma^k},
\end{align}
where $\Gamma(\alpha,x)=\int_x^\infty{t^{\alpha-1}e^{-t}dt}$ is the
complementary incomplete gamma function. The BF system provides a
transmit array gain $N_t$ as
$\mathbb{E}_{\mathbf{h}}\left[\gamma|\mathbf{h}^*[n]\mathbf{f}[n]|^2\right]=N_t\gamma$.

\subsubsection{MU Mode (Zero-forcing), $1<M\leq{N_t}$}
The received SINR for the $u$-th user in a linear precoding MU-MIMO
system of mode $M$ is given by
\begin{align}
\mbox{SINR}_{u}(M)=\frac{\frac{\gamma}{M}|\mathbf{h}^*_u[n]\mathbf{f}_u[n]|^2}{1+\frac{\gamma}{M}\sum_{u'\neq{u}}|\mathbf{h}_u^*[n]\mathbf{f}_{u'}[n]|^2}.
\end{align}

Denote
${\mathbf{H}}[n]=[{\mathbf{h}}_1[n],{\mathbf{h}}_2[n],\cdots,{\mathbf{h}}_U[n]]^*$,
and the pseudo-inverse of ${\mathbf{H}}[n]$ as
$\mathbf{F}[n]=\mathbf{H}^\dag[n]=\mathbf{H}^*[n](\mathbf{H}[n]\mathbf{H}^*[n])^{-1}$.
The precoding vector for the $u$-th user is obtained by normalizing
the $u$-th column of $\mathbf{F}[n]$. Therefore,
$\mathbf{h}^*_u[n]\mathbf{f}_{u'}[n]=0,\,\forall{u\neq{u'}}$, i.e.
there is no inter-user interference and each user gets an equivalent
interference-free channel. The SINR for the $u$-th user becomes
\begin{align}\label{ZFSINR}
\mbox{SINR}_{ZF,u}(M)={\frac{\gamma}{M}|\mathbf{h}^*_u[n]\mathbf{f}_u[n]|^2}.
\end{align}
Due to the isotropic nature of i.i.d. Rayleigh fading, such
orthogonality constraints to precancel inter-user interference
consume $M-1$ degrees of freedom at the transmitter. As a result,
the effective channel gain of each parallel channel is a chi-square
random variable with $2(N_t-M+1)$ degrees of freedom
\cite{LeeJin07IT}, i.e.
$|\mathbf{h}_u^*[n]\mathbf{f}_u[n]|^2\sim\chi^2_{2(N_t-M+1)}$.
Therefore, the channel for each user is equivalent to a diversity
channel with order $N_t-M+1$ and effective SNR
$\gamma_u=\frac{\gamma}{M}$. The achievable rate for the $u$-th user
in mode $M$ is
\begin{align}\label{R_ZFu}
R_{ZF,u}(\gamma_u,M)&=\mathbb{E}_{\mathbf{h}}\left[\log_2\left(1+\frac{\gamma}{M}|\mathbf{h}_u^*[n]\mathbf{f}_u[n]|^2\right)\right]\notag\\
&=R_{BF}(\gamma_u,N_t-M+1),
\end{align}
where $R_{BF}(\cdot,\cdot)$ is given in \eqref{C_BF}. The achievable
sum rate for the ZF system of mode $M$ is
\begin{align}\label{R_ZF}
R_{CSIT}(M)&=\sum_{u=1}^MR_{ZF,u}(\gamma_u,M)\notag\\
&{=}MR_{BF}(\gamma_u,N_t-M+1).
\end{align}
When $M=1$, this reduces to \eqref{C_BF}.

\subsubsection{Mode Selection}
From \eqref{R_ZF}, the system in mode $M$ provides a spatial
multiplexing gain of $M$ and an array gain of $N_t-M+1$ for each
user. As $M$ increases, the achievable spatial multiplexing gain
increases but the array gain decreases. Therefore, there is a
tradeoff between the achievable array gain and the spatial
multiplexing gain. From \eqref{R_ZF}, the mode that achieves a
higher throughput for the given average SNR can be determined as
\begin{equation}\label{CSIT_MS}
M^\star=\arg\max_{1\leq{M}\leq{N_t}}R_{CSIT}(M).
\end{equation}

\subsection{Imperfect CSIT}
In this section, we consider a system with both delay and channel
quantization. As it is difficult to derive the exact achievable rate
for such a system, we provide accurate closed-form approximations
for mode selection.

\subsubsection{SU Mode (Eigen-Beamforming)}
With delay and channel quantization, the BF vector is based on the
delayed version of the quantized channel direction,
$\mathbf{f}^{(QD)}=\hat{\mathbf{h}}[n-1]$.

To get a good approximation for the achievable rate for the SU mode,
we first make the following approximation on the instantaneous
received SNR
\begin{align}
\mbox{SNR}_{BF}^{(QD)}&=P|\mathbf{h}^*[n]\hat{\mathbf{h}}[n-1]|^2\notag\\
&=P|(\rho\mathbf{h}[n-1]+\mathbf{e}[n])^*\hat{\mathbf{h}}[n-1]|^2\notag\\
&\approx{P}\rho^2|\mathbf{h}^*[n-1]\hat{\mathbf{h}}[n-1]|^2,\label{SNR_BF}
\end{align}
i.e. we remove $\mathbf{e}[n]$, which is normally insignificant
compared to $\rho\mathbf{h}[n-1]$. Eq. \eqref{SNR_BF} is equivalent
for a limited feedback system. From \cite{AuLov07Twc}, the
achievable rate of the limited feedback BF system is given by
\begin{align}
&R_{BF}^{(Q)}(P)=\log_2{(e)}\left(e^{1/P}\sum_{k=0}^{N_t-1}E_{k+1}\left(\frac{1}{P}\right)\right.\notag\\
&\left.-\int_0^1\left(1-(1-x)^{N_t-1}\right)^{2^B}\frac{N_t}{x}e^{1/P{x}}E_{N_t+1}\left(\frac{1}{P{x}}\right)dx\right),
\end{align}
where $E_n(x)=\int_1^\infty{e}^{-xt}x^{-n}dt$ is the $n$-th order
exponential integral. So the achievable rate of the BF system with
both delay and channel quantization can be approximated as
\begin{equation}\label{Approx_BFQD}
R_{BF}^{(QD)}(P)\approx{R}_{BF}^{(Q)}(\rho^2P).
\end{equation}

\subsubsection{MU Mode (Zero-Forcing)}
For the MU mode with imperfect CSIT, the precoding vectors are
designed based on the quantized channel directions with delay, which
achieve
$\hat{\mathbf{h}}^*_u[n-1]\mathbf{f}^{(QD)}_{u'}[n]=0,\,\forall{u\neq{u'}}$.
The SINR for the $u$-th user in mode $M$ is given as
\begin{equation}\label{ZFSINR_QD}
\gamma_{ZF,u}^{(QD)}(M)=\frac{\frac{P}{M}|\mathbf{h}^*_u[n]\mathbf{f}^{(QD)}_u[n]|^2}{1+\frac{P}{M}\sum_{u'\neq{u}}|\mathbf{h}_u^*[n]\mathbf{f}^{(QD)}_{u'}[n]|^2}.
\end{equation}
Following \emph{Theorem 3} in \cite{ZhaAnd08Allerton}, the average
residual inter-user interference for the $u$-th user is
\begin{align}\label{I_QD}
&\mathbb{E}_{\mathbf{h}}\left[\frac{P}{M}\sum_{u'\neq{u}}|\mathbf{h}_u^*[n]\mathbf{f}^{(QD)}_{u'}[n]|^2\right]\notag\\
=&\left(1-\frac{1}{M}\right)P\left(\rho_u^2\frac{N_t}{N_t-1}2^{-\frac{B}{N_t-1}}+\epsilon_{e,u}^2\right).
\end{align}
\begin{remark}
The residual interference depends on delay, codebook size, $N_t$,
and $M$. It increases with delay, and decreases with the codebook
size. At high SNR, it makes the system interference-limited. With
other parameters fixed, residual interference increases as $M$
increases, and mode selection will take this into consideration.
\end{remark}

To approximate the achievable ergodic rate, we first approximate the
signal term as
\begin{align}\label{s_QD}
P_S=&\frac{P}{M}|\mathbf{h}^*_u[n]\mathbf{f}^{(QD)}_u[n]|^2\notag\\
=&\frac{P}{M}|(\rho_u\mathbf{h}_u[n-1]+\mathbf{e}_u[n])^*\mathbf{f}^{(QD)}_u[n]|^2\notag\\
\stackrel{(a)}{\approx}&\frac{P}{M}|\rho_u\mathbf{h}^*_u[n-1]\mathbf{f}^{(QD)}_u[n]|^2\notag\\
\stackrel{(b)}{\approx}&\frac{P}{M}|\rho_u\|\mathbf{h}_u[n-1]\|\hat{\mathbf{h}}^*_u[n-1]\mathbf{f}^{(QD)}_u[n]|^2,
\end{align}
where step (a) removes $\mathbf{e}_u^*[n]\mathbf{f}^{(QD)}_u[n]$,
which is normally very small. Step (b) approximates the actual
channel direction by the quantized version, which is justified for
small quantization error. As $\hat{\mathbf{h}}_u[n-1]$ is
independent from other users, similar to the system with perfect
CSIT,
$\|\mathbf{h}_u[n-1]\|\cdot|\hat{\mathbf{h}}^*_u[n-1]\mathbf{f}^{(QD)}[n]|^2\sim\chi^2_{2(N_t-M+1)}$.

\begin{figure*}[!t]
\normalsize
\begin{align}\label{SINR_ApproxDQ}
\gamma_{ZF,u}^{(QD)}\approx\frac{\frac{P}{M}\rho^2\|\mathbf{h}_u[n-1]\|\cdot|\hat{\mathbf{h}}^*_u[n-1]\mathbf{f}^{(QD)}[n]|^2}{1+\frac{P}{M}\sum_{u'\neq{u}}\rho^2|\mathbf{h}_u^*[n-1]\mathbf{f}^{(QD)}_{u'}[n]|^2+\frac{P}{M}\sum_{u'\neq{u}}|\mathbf{e}_u^*[n]\mathbf{f}_{u'}[n]|^2}
\end{align}
\begin{equation}\label{R_ZFDQ}
R_{QD,u}(M)\approx\log_2(e)\sum_{i=0}^{N_t-L-1}\sum_{j=1}^2\sum_{k=0}^{L-1}\sum_{l=0}^i\frac{a_k^{(j)}(l+k)!}{l!(i-l)!}\alpha^{l+k-i+1}\hat{I}\left(\frac{1}{\alpha},\frac{\alpha}{\delta_j},i,l+k+1\right)
\end{equation}
\begin{equation}\label{R_ZFD}
R_{D,u}(M)\approx\log_2(e)\sum_{i=0}^{N_t-L-1}\sum_{l=0}^i{{L+l-1}\choose{l}}\frac{\alpha^{L+l-i}}{\beta^L(i-l)!}\cdot{\hat{I}}\left(\frac{1}{\alpha},\frac{\alpha}{\beta},i,L+l\right)
\end{equation}
\hrulefill \vspace*{4pt}
\end{figure*}

The received SINR for the $u$-th user can then be approximated as in
\eqref{SINR_ApproxDQ}, on the top of the next page. The
approximation for the denominator comes from removing the terms with
both $\mathbf{e}_u[n]$ and $\mathbf{f}^{(QD)}_{u'}[n]$. For the
interference terms in \eqref{SINR_ApproxDQ}, as
$\mathbf{e}_u[n]\sim\mathcal{CN}(\mathbf{0},\epsilon_{e,u}^2\mathbf{I})$,
$|\mathbf{f}_{u'}[n]|^2=1$, and they are independent,
$|\mathbf{e}_u^*[n]\mathbf{f}_{u'}[n]|^2$ is an exponential random
variable with mean $\epsilon_e^2$. The following lemma in
\cite{ZhaAnd} provides the distribution of the term
$|\mathbf{h}_u^*[n-1]\mathbf{f}^{(QD)}_{u'}[n]|^2$.
\begin{lemma}\label{lemma_Iq}
Based on the quantization cell approximation \cite{YooJin07JSAC},
the interference term due to quantization,
$|\mathbf{h}_u^*[n-1]\mathbf{f}^{(QD)}_{u'}[n]|^2$, can be
approximated as an exponential random variable with mean
$\delta=2^{-\frac{B}{N_t-1}}$.
\end{lemma}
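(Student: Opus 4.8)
The plan is to reduce the quantization interference to a product of three independent random variables whose laws are fixed by the quantization cell approximation, and then to collapse that product to a single exponential via two applications of the standard Gamma--Beta product identity. Throughout I suppress the time index and write $\tilde{\mathbf{h}}_u$ for the true channel direction, $\hat{\mathbf{h}}_u$ for its quantization, and $\mathbf{f}_{u'}$ for the interfering precoder.

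First I would isolate the component of the channel that survives the zero-forcing constraint. Decomposing the channel direction along its quantization,
\begin{equation}
\tilde{\mathbf{h}}_u=\cos\theta_u\,\hat{\mathbf{h}}_u+\sin\theta_u\,\mathbf{s}_u,
\end{equation}
with $\sin^2\theta_u=1-|\tilde{\mathbf{h}}_u^*\hat{\mathbf{h}}_u|^2$ the quantization error and $\mathbf{s}_u$ a unit vector orthogonal to $\hat{\mathbf{h}}_u$. Since the precoder satisfies $\hat{\mathbf{h}}_u^*\mathbf{f}_{u'}=0$, the aligned term vanishes and
\begin{equation}
|\mathbf{h}_u^*\mathbf{f}_{u'}|^2=\|\mathbf{h}_u\|^2\,\sin^2\theta_u\,|\mathbf{s}_u^*\mathbf{f}_{u'}|^2.
\end{equation}

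Next I would identify the three factors and argue that they are independent. By the isotropy of both Rayleigh fading and RVQ, the magnitude $\|\mathbf{h}_u\|^2\sim\chi^2_{2N_t}$ (a $\mathrm{Gamma}(N_t,1)$ variable), the error $\sin^2\theta_u$, and the error direction $\mathbf{s}_u$ are mutually independent, with $\mathbf{s}_u$ isotropic in the $(N_t-1)$-dimensional subspace orthogonal to $\hat{\mathbf{h}}_u$. Because $\mathbf{f}_{u'}$ lies in that same subspace and depends only on the other users' channels, the projection is a Beta variate, $|\mathbf{s}_u^*\mathbf{f}_{u'}|^2\sim\mathrm{Beta}(1,N_t-2)$. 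The quantization cell approximation \cite{YooJin07JSAC} then supplies the law of the error: modeling each Voronoi region as a spherical cap of normalized size $\delta=2^{-B/(N_t-1)}$ yields $\sin^2\theta_u\sim\delta\cdot\mathrm{Beta}(N_t-1,1)$, whose mean $\tfrac{N_t-1}{N_t}\delta$ matches the known RVQ quantization error.

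Finally I would collapse the product using the beta--gamma identity that an independent product of a $\mathrm{Gamma}(a+b,\lambda)$ and a $\mathrm{Beta}(a,b)$ is a $\mathrm{Gamma}(a,\lambda)$. Applied once, the magnitude and error merge in distribution,
\begin{equation}
\|\mathbf{h}_u\|^2\sin^2\theta_u\sim\delta\cdot\mathrm{Gamma}(N_t,1)\cdot\mathrm{Beta}(N_t-1,1)=\mathrm{Gamma}(N_t-1,\delta),
\end{equation}
and applied a second time against the $\mathrm{Beta}(1,N_t-2)$ projection gives $\mathrm{Gamma}(N_t-1,\delta)\cdot\mathrm{Beta}(1,N_t-2)=\mathrm{Gamma}(1,\delta)$, i.e. an exponential random variable with mean $\delta$, as claimed. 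The step I expect to be the crux is the middle one: the exponential shape and the exact mean are only as sharp as the quantization cell approximation that replaces the true minimum-of-Betas law of $\sin^2\theta_u$ by $\delta\cdot\mathrm{Beta}(N_t-1,1)$ and that treats $\mathbf{s}_u$ as isotropic and independent of $\mathbf{f}_{u'}$. Once these two modeling assumptions are granted, the Gamma--Beta algebra is exact and forces both the exponential law and the value $\delta=2^{-B/(N_t-1)}$ of its mean.
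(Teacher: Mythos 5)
Your proof is correct, and it is essentially the standard derivation behind this result: the paper itself gives no proof (it imports the lemma from \cite{ZhaAnd}, building on \cite{YooJin07JSAC}), and those references use exactly your route --- decompose $\tilde{\mathbf{h}}_u$ along $\hat{\mathbf{h}}_u$, kill the aligned component via the zero-forcing constraint $\hat{\mathbf{h}}_u^*\mathbf{f}_{u'}=0$, invoke the cell approximation $\sin^2\theta_u\sim\delta\cdot\mathrm{Beta}(N_t-1,1)$ together with $\|\mathbf{h}_u\|^2\sim\mathrm{Gamma}(N_t,1)$ and $|\mathbf{s}_u^*\mathbf{f}_{u'}|^2\sim\mathrm{Beta}(1,N_t-2)$, and collapse the product with the beta--gamma identity. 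One minor quibble: the isotropy of $\mathbf{s}_u$ in the nullspace of $\hat{\mathbf{h}}_u$ and its independence from $\sin^2\theta_u$ and $\mathbf{f}_{u'}$ are exact consequences of RVQ symmetry rather than additional modeling assumptions; the only genuine approximation is replacing the minimum-of-Betas law of $\sin^2\theta_u$ by the spherical-cap law.
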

\renewcommand{\labelenumi}{(\roman{enumi})}
\begin{remark}
The residual interference terms due to both delay and quantization
are exponential random variables, which means the delay and
quantization error have equivalent effects but with different means.
\end{remark}

Based on the distributions of the interference terms, we can get the
following theorem on the achievable rate for the MU mode.
\begin{theorem}\label{thm_R_DQ}
The achievable ergodic rate for the $u$-th user in mode $M$ ($M>1$)
with both delay and channel quantization can be approximated by
\eqref{R_ZFDQ}, where $\alpha=\frac{\rho_u^2P}{M}$,
$\delta_1=\frac{\rho_u^2P\delta}{M}$,
$\delta_2=\frac{\epsilon_{e,u}^2P}{M}$, $L=M-1$, $\hat{I}$ is the
following integral
\begin{equation}\label{I3}
\hat{I}(a,b,m,n) =\int_0^\infty\frac{x^me^{-ax}}{(x+b)^n(x+1)}dx,
\end{equation}
and
\begin{align}
a^{(1)}_i&=\frac{\mathcal{A}}{\delta_1^{i+1}}\left(\frac{\delta_1}{\delta_1-\delta_2}\right)^L\left(\frac{\delta_2}{\delta_2-\delta_1}\right)^{L-1-i},\notag\\
a^{(2)}_i&=\frac{\mathcal{A}}{\delta_2^{i+1}}\left(\frac{\delta_2}{\delta_2-\delta_1}\right)^L\left(\frac{\delta_1}{\delta_1-\delta_2}\right)^{L-1-i}\notag,
\end{align}
with $\mathcal{A}=\frac{1}{(L-1)!}\frac{(2(L-1)-i)!}{i!(L-1-i)!}$.
\end{theorem}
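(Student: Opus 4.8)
The plan is to reduce $R_{QD,u}(M)=\log_2(e)\,\mathbb{E}\!\left[\ln\!\left(1+\gamma_{ZF,u}^{(QD)}\right)\right]$ to the integral \eqref{I3} by first identifying the distributions of the (approximate) signal and interference appearing in \eqref{SINR_ApproxDQ}, and then applying the standard ergodic-rate identity. First I would pin down the statistics. Writing the numerator of \eqref{SINR_ApproxDQ} as $S=\alpha X$ with $X=\|\mathbf{h}_u[n-1]\|\cdot|\hat{\mathbf{h}}^*_u[n-1]\mathbf{f}^{(QD)}[n]|^2\sim\chi^2_{2(N_t-M+1)}$, the signal power $S$ is Gamma distributed with shape $N_t-M+1=N_t-L$ and scale $\alpha$, so its complementary CDF is
\begin{equation}
P(S>t)=e^{-t/\alpha}\sum_{i=0}^{N_t-L-1}\frac{(t/\alpha)^i}{i!}.
\end{equation}
For the denominator, Lemma \ref{lemma_Iq} makes each quantization term $\frac{P}{M}\rho_u^2|\mathbf{h}_u^*[n-1]\mathbf{f}^{(QD)}_{u'}[n]|^2$ exponential with mean $\delta_1$, and each delay term $\frac{P}{M}|\mathbf{e}_u^*[n]\mathbf{f}_{u'}[n]|^2$ exponential with mean $\delta_2$; since there are $L=M-1$ of each and they are treated as independent, the total interference is $I=G_1+G_2$ with $G_1\sim\mathrm{Gamma}(L,\delta_1)$ and $G_2\sim\mathrm{Gamma}(L,\delta_2)$ independent. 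I would also invoke independence of $S$ and $I$, consistent with the approximations used to obtain \eqref{SINR_ApproxDQ}.

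The core identity is $\mathbb{E}[\ln(1+\gamma)]=\int_0^\infty \frac{P(\gamma>x)}{1+x}\,dx$ with $\gamma=S/(1+I)$, so that $P(\gamma>x)=\mathbb{E}_I\!\left[P\!\left(S>x(1+I)\right)\right]$. Substituting the signal CCDF and expanding $(1+I)^i=\sum_{l=0}^i\binom{i}{l}I^l$ leaves the task of evaluating $\mathbb{E}_I[I^l e^{-xI/\alpha}]$ for each $l$. This is where the density of $I$ is needed: I would obtain it by partial-fraction expansion of the Laplace transform $(1+\delta_1 s)^{-L}(1+\delta_2 s)^{-L}$ followed by term-by-term inversion, yielding $f_I(y)=\sum_{j=1}^2\sum_{k=0}^{L-1}a_k^{(j)}y^k e^{-y/\delta_j}$ with exactly the coefficients $a_k^{(j)}$ stated in the theorem. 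A direct Laplace-integral then gives $\mathbb{E}_I[I^l e^{-xI/\alpha}]=\sum_{j,k}a_k^{(j)}(l+k)!\,\alpha^{l+k+1}(x+\alpha/\delta_j)^{-(l+k+1)}$, which is precisely what produces the pole $(x+\alpha/\delta_j)^{l+k+1}$ appearing in \eqref{I3}.

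Collecting the factors $\frac{1}{i!}\binom{i}{l}=\frac{1}{l!(i-l)!}$ together with the powers of $\alpha$ then reassembles the remaining $x$-integral into $\hat{I}\!\left(\tfrac{1}{\alpha},\tfrac{\alpha}{\delta_j},i,l+k+1\right)$ and reproduces \eqref{R_ZFDQ} term by term. The only genuinely laborious step is the partial-fraction computation of $a_k^{(j)}$: because both poles of the Laplace transform have order $L$, extracting the coefficients requires repeated differentiation (higher-order residues) and careful bookkeeping of the binomial and factorial terms, and this is what yields the factor $\mathcal{A}=\frac{1}{(L-1)!}\frac{(2(L-1)-i)!}{i!(L-1-i)!}$ and the powers $\big(\tfrac{\delta_1}{\delta_1-\delta_2}\big)^{L},\big(\tfrac{\delta_2}{\delta_2-\delta_1}\big)^{L-1-i}$. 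Everything else is a routine interchange of summation and integration, valid since all sums are finite and the integrands are nonnegative and integrable.
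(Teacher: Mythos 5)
Your proposal is correct and follows essentially the same route as the paper's own (very terse) proof: assume the interference terms are mutually independent and independent of the signal, derive the CDF of the approximate SINR in \eqref{SINR_ApproxDQ}, and integrate to get the ergodic rate. You supply the details the paper omits for space — the Gamma/exponential distribution identifications, the partial-fraction inversion giving the coefficients $a_k^{(j)}$, and the assembly into $\hat{I}$ — and these check out (e.g., for $L=1$ your density reduces to the standard two-exponential convolution, and your final expression matches \eqref{R_ZFDQ} term by term).
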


\begin{proof}
This approximation is derived by assuming the interference terms are
pair-wise independent, and also independent of the signal term. Then
the CDF of the SINR can be derived, following which the achievable
rate can be calculated. A closed form expression can be derived for
$\hat{I}(a,b,m,n)$, which is not provided due to space limitation.
\end{proof}

As a special case, with only delay, following \eqref{s_QD} and
\eqref{SINR_ApproxDQ} the received SINR for the $u$-th user is
approximated as
\begin{align}\label{SINR_Dapprox}
\gamma_{ZF,u}^{(D)}\approx\frac{\frac{P}{M}|\rho_u\mathbf{h}^*_u[n-1]\mathbf{f}_u[n]|^2}{1+\frac{P}{M}\sum_{u'\neq{u}}|\mathbf{e}_u^*[n]\mathbf{f}_{u'}[n]|^2},
\end{align}
for which the achievable ergodic rate is provided as follows.
\begin{corollary}\label{thm_R_D}
The achievable ergodic rate for the $u$-th user in the delayed
system of mode $M$ ($M>1$) can be approximated by \eqref{R_ZFD},
where $\alpha=\frac{\rho_u^2P}{M}$,
$\beta=\frac{\epsilon_{e,u}^2P}{M}$, $L=M-1$, and $\hat{I}$ is the
integral in \eqref{I3}.
\end{corollary}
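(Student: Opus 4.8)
The plan is to treat Corollary \ref{thm_R_D} as the single-interference-source specialization of Theorem \ref{thm_R_DQ}: with quantization switched off, the only residual interference comes from the delay error $\mathbf{e}_u[n]$, so the approximate SINR in \eqref{SINR_Dapprox} takes the form $\gamma = \alpha G/(1+I)$, and I would compute its complementary CDF and then integrate it against $1/(1+x)$ to get the ergodic rate.

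First I would pin down the two independent ingredients. From the perfect-CSIT analysis the signal gain $G := |\mathbf{h}_u^*[n-1]\mathbf{f}_u[n]|^2$ is chi-square with $2(N_t-M+1)$ degrees of freedom, i.e.\ in the complex normalization used here a unit-scale Gamma variable of shape $N_t-L$ (recall $L=M-1$), so the signal term is $\alpha G$ with $\alpha = \rho_u^2 P/M$. Each of the $L$ interference contributions $\frac{P}{M}|\mathbf{e}_u^*[n]\mathbf{f}_{u'}[n]|^2$ is exponential with mean $\beta = \epsilon_{e,u}^2 P/M$, as noted above the theorem; invoking the pairwise-independence approximation stated in the proof of Theorem \ref{thm_R_DQ}, their sum $I$ is Gamma of shape $L$ and scale $\beta$, with density $\frac{y^{L-1}e^{-y/\beta}}{\beta^L(L-1)!}$.

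Next I would build the survival function. Conditioning on $I=y$ and using the Gamma tail $\Pr(G>g)=e^{-g}\sum_{i=0}^{N_t-L-1} g^i/i!$ at $g=x(1+y)/\alpha$, I would expand $(1+y)^i$ by the binomial theorem and carry out the elementary integral $\int_0^\infty y^{L-1+l}e^{-y(x/\alpha+1/\beta)}\,dy = (L-1+l)!\,(x/\alpha+1/\beta)^{-(L+l)}$. Rewriting $x/\alpha+1/\beta = \alpha^{-1}(x+\alpha/\beta)$ pulls out a factor $\alpha^{L+l}$ and yields
\begin{equation*}
\Pr(\gamma>x)=\sum_{i=0}^{N_t-L-1}\sum_{l=0}^i{{L+l-1}\choose{l}}\frac{\alpha^{L+l-i}}{\beta^L(i-l)!}\,\frac{x^i e^{-x/\alpha}}{(x+\alpha/\beta)^{L+l}},
\end{equation*}
where the coefficient collapse uses the identity ${i\choose l}(L-1+l)!/(i!\,(L-1)!) = {{L+l-1}\choose{l}}/(i-l)!$.

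Finally I would apply $\mathbb{E}[\log_2(1+\gamma)] = \log_2(e)\int_0^\infty \frac{\Pr(\gamma>x)}{1+x}\,dx$ (integration by parts on the rate integral), interchange the finite sums with the integral, and recognize the remaining $x$-integral as exactly $\hat{I}(1/\alpha,\alpha/\beta,i,L+l)$ of \eqref{I3}, reproducing \eqref{R_ZFD}. The main obstacle is bookkeeping rather than analysis: the delicate part is the coefficient algebra that folds the binomial coefficient, the two factorials, and the competing powers of $\alpha$ into the compact form above, together with checking the chi-square normalization so that the signal shape is $N_t-L$ and its tail sum truncates at $N_t-L-1$. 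The independence approximation is inherited from Theorem \ref{thm_R_DQ} and needs no new justification here.
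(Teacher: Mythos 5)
Your proposal is correct and follows essentially the same route as the paper: invoke the pairwise-independence approximation so the delay-induced interference sum is Gamma$(L,\beta)$ and the signal gain is $\chi^2_{2(N_t-M+1)}$ scaled by $\alpha$, derive the complementary CDF of the SINR in \eqref{SINR_Dapprox}, and integrate it against $1/(1+x)$ to recover \eqref{R_ZFD} via the integral $\hat{I}$ of \eqref{I3}. The paper only sketches these steps (it is the delay-only specialization of the proof of Theorem~\ref{thm_R_DQ}); your coefficient algebra, including the identity ${i\choose l}(L-1+l)!/\bigl(i!\,(L-1)!\bigr)={{L+l-1}\choose{l}}/(i-l)!$, correctly fills in the omitted details.
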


From \emph{Lemma \ref{lemma_Iq}}, the effects of delay and channel
quantization are equivalent, so the result in this corollary also
applies for the limited feedback system, replacing
$\epsilon_{e,u}^2$ by $\delta=2^{-\frac{B}{N_t-1}}$.

\subsection{Mode Selection}
Based on \eqref{Approx_BFQD} and \eqref{R_ZFDQ}, the active mode in
the system with both delay and channel quantization is selected
according to
\begin{equation}\label{eq_mode}
M^\star=\arg\max_{1\leq{M}\leq{N_t}}R_{QD}(M),
\end{equation}
where $R_{QD}(M)=\sum_{u=1}^MR_{QD,u}(M)$.
\begin{remark}
Considering \eqref{I_QD}, \eqref{s_QD}, and \eqref{R_ZFDQ}, the mode
$M$ is now related to the residual interference, the transmit array
gain, and the spatial multiplexing gain. The multi-mode transmission
is to balance between these effects to improve the system
throughput.
\end{remark}

\section{Numerical Results: Verification of Analysis and Key Observations}\label{Sec:Num}
\begin{figure}[htb]
\centering
\includegraphics[width=3.5in]{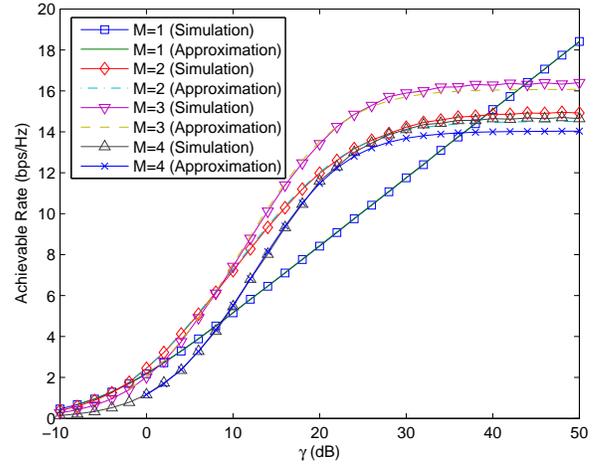}
\caption{Comparison of simulations and approximations for different
$M$, $N_t=4$, mobility $v=10$ km/hr, $T_s=1$ msec,
$B=18$.}\label{fig_SimvsCal_DQ}
\end{figure}

We first verify the derived approximations for the achievable rates
in Fig. \ref{fig_SimvsCal_DQ}. We see that the approximation is very
accurate at low to medium SNRs. At high SNR, the approximation
becomes a lower bound, and the accuracy decreases as $M$ increases.
Interestingly, we see that the mode $M=3$ always provides a higher
throughput than the full MU mode $M=4$. This is due to the fact that
the full mode has the highest level of residual interference while
provides no array gain. Therefore, it is desirable to do spatial
division multiplexing for fewer than $N_t$ users, which provides
addition array gain and reduces the residual interference for each
user. We see that the proposed multi-mode transmission provides a
significant throughput gain over the SU-MIMO system ($M=1$). In
addition, it is able to provide a throughput gain around $2$ bps/Hz
over the dual-mode switching \cite{ZhaAnd08Allerton,ZhaAnd} at
medium SNR.

\begin{figure*}[ht]
\centerline{\subfigure[Different $f_dT_s$,
$B=18$.]{\includegraphics[width=3.2in]{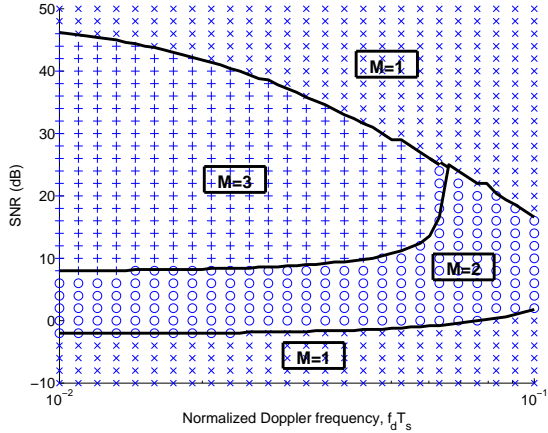}
\label{figSwitchRegion_D_B18}} \hfil \subfigure[Different $B$,
$f_c=2GHz$, $T_s=1$ msec., and $v=10$
km/hr.]{\includegraphics[width=3.2in]{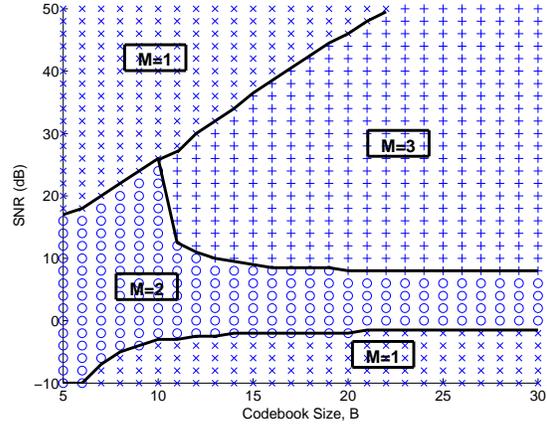}
\label{figSwitchRegion_Q_v10}}} \caption{Operating regions for
different modes with both CSI delay and channel quantization,
$U=N_t=4$. The mode $M=i$ means that there are $i$ active users. In
this plot, `$\times$' is for $M=1$, `$\circ$' is for $M=2$, `$+$' is
for $M=3$, and `$\square$' is for $M=4$.} \label{figSwitchRegion_DQ}
\end{figure*}

From \eqref{eq_mode}, we can determine the active mode $M^\star$ for
a given scenario. Accordingly, the operating regions for different
modes can be plotted for different system parameters. Fig.
\ref{figSwitchRegion_D_B18} and Fig. \ref{figSwitchRegion_Q_v10}
show the operating regions for different $f_dT_s$ and different $B$,
respectively. There are several key observations.
\begin{enumerate}
\item For the given $f_dT_s$ and $B$, the SU mode ($M=1$) will be active at both
low and high SNRs, due to its array gain and the robustness to
imperfect CSIT, respectively.

\item For higher MU modes to be active, $f_dT_s$ needs to be small
while $B$ needs to be large. Specifically, to get the mode $M=3$
activated, we need $f_dT_s<0.57$ with $B=18$ as in Fig.
\ref{figSwitchRegion_D_B18}, and need $B>10$ with $v=10$ km/hr,
$f_c=2$ GHz, and $T_s=1$ msec as in Fig.
\ref{figSwitchRegion_Q_v10}.

\item The highest mode $M=N_t$ is not active with the considered
parameters. We find that it is possible to be active only when
$f_dT_s\lesssim{0.01}$ and $B$ is large enough.
\end{enumerate}

\section{Conclusions}\label{Sec:Conclusion}
In this paper, we derive the achievable ergodic rates for the MIMO
broadcast channel with imperfect CSIT. Due to residual inter-user
interference, it is not desirable to serve too many users with
spatial division multiplexing. Based on the derived rates, a
multi-mode transmission strategy is proposed that adaptively adjusts
the number of active users. It is based on the channel statistics
information, and only the selected users need to feed back the
instantaneous channel information.

\bibliographystyle{IEEEtran}
\bibliography{CSIDelay}

\begin{thebibliography}{10}
\providecommand{\url}[1]{#1}
\csname url@rmstyle\endcsname
\providecommand{\newblock}{\relax}
\providecommand{\bibinfo}[2]{#2}
\providecommand\BIBentrySTDinterwordspacing{\spaceskip=0pt\relax}
\providecommand\BIBentryALTinterwordstretchfactor{4}
\providecommand\BIBentryALTinterwordspacing{\spaceskip=\fontdimen2\font plus
\BIBentryALTinterwordstretchfactor\fontdimen3\font minus
  \fontdimen4\font\relax}
\providecommand\BIBforeignlanguage[2]{{%
\expandafter\ifx\csname l@#1\endcsname\relax
\typeout{** WARNING: IEEEtran.bst: No hyphenation pattern has been}%
\typeout{** loaded for the language `#1'. Using the pattern for}%
\typeout{** the default language instead.}%
\else
\language=\csname l@#1\endcsname
\fi
#2}}

\bibitem{LovHea08JSAC}
D.~J. Love, R.~W. {Heath Jr.}, V.~K.~N. Lau, D.~Gesbert, B.~D. Rao, and
  M.~Andrews, ``An overview of limited feedback in wireless communication
  systems,'' \emph{IEEE J. Select. Areas Commun.}, vol.~26, no.~8, pp.
  1341--1365, Oct. 2008.

\bibitem{Jin06IT}
N.~Jindal, ``{MIMO} broadcast channels with finite rate feedback,'' \emph{IEEE
  Trans. Inform. Theory}, vol.~52, no.~11, pp. 5045--5059, Nov. 2006.

\bibitem{CaiJin07Submit}
G.~Caire, N.~Jindal, M.~Kobayashi, and N.~Ravindran, ``Multiuser {MIMO}
  downlink made practical: Achievable rates with simple channel state
  estimation and feedback schemes,'' \emph{Submitted to IEEE Trans. Information
  Theory}, available online at http://arxiv.org/pdf/0711.2642.

\bibitem{DinLov07Tsp}
P.~Ding, D.~J. Love, and M.~D. Zoltowski, ``Multiple antenna broadcast channels
  with partial and limited feedback,'' \emph{IEEE Trans. Signal Processing},
  vol.~55, no.~7, pp. 3417--3428, Jul. 2007.

\bibitem{YooJin07JSAC}
T.~Yoo, N.~Jindal, and A.~Goldsmith, ``Multi-antenna downlink channels with
  limited feedback and user selection,'' \emph{IEEE J. Select. Areas Commun.},
  vol.~25, no.~7, pp. 1478--1491, Sept. 2007.

\bibitem{LapSha05Allerton}
A.~Lapidoth, S.~Shamai, and M.~Wigger, ``On the capacity of fading {MIMO}
  broadcast channels with imperfect transmitter side-information,'' in
  \emph{Proceedings of the 43rd Allerton Conference on Communication, Control
  and Computing}, Monticello, IL, Sept. 2005.

\bibitem{YeuPar07Accept}
C.~K. Au-Yeung, S.~Y. Park, and D.~J. Love, ``A simple dual-mode limited
  feedback multiuser downlink system,'' \emph{accepted for publication in IEEE
  Trans. Comm.}

\bibitem{KouGes07SPAWC}
M.~Kountouris, D.~Gesbert, and T.~Salzer, ``Distributed transmit mode selection
  for {MISO} broadcast channels with limited feedback: Switching from {SDMA} to
  {TDMA},'' in \emph{Proc. IEEE Workshop on Signal Processing Advances in
  Wireless Communications (SPAWC 2008)}, Recife, Brazil, Jul. 2008, pp.
  371--375.

\bibitem{KouFra07ICASSP}
M.~Kountouris, R.~{de Francisco}, D.~Gesbert, D.~T.~M. Slock, and T.~Salzer,
  ``Efficient metrics for scheduling in {MIMO} broadcast channels with limited
  feedback,'' in \emph{Proc. of the IEEE Int. Conf. on Acoustics, Speech, and
  Signal Proc.}, Honolulu, USA, Apr. 2007, pp. 109--112.

\bibitem{TriBoc07VTC}
M.~Trivellato, F.~Boccardi, and F.~Tosato, ``User selection schemes for {MIMO}
  broadcast channels with limited feedback,'' in \emph{Proc. IEEE Veh. Technol.
  Conf.}, Dublin, Ireland, Apr. 2007.

\bibitem{ZhaAnd08Allerton}
J.~Zhang, J.~G. Andrews, and R.~W. {Heath Jr.}, ``Single-user {MIMO} vs.
  multiuser {MIMO} in the broadcast channel with {CSIT} constraints,'' in
  \emph{Proc. of Allerton Conf. on Comm. Control and Comp.}, Monticello, IL,
  Sept. 2008.

\bibitem{ZhaAnd}
J.~Zhang, R.~{W. Heath Jr.}, M.~Kountouris, and J.~G. Andrews, ``Mode switching
  for {MIMO} broadcast channel based on delay and channel quantization,''
  \emph{submitted to EURASIP Special Issue on Multiuser MIMO Transmission with
  Limited Feedback, Cooperation, and Coordination}, available at
  http://users.ece.utexas.edu/$\sim$jzhang2/.

\bibitem{Jin05ISIT}
N.~Jindal, ``A high {SNR} analysis of {MIMO} broadcast channels,'' in
  \emph{Proc. IEEE Int. Symp. Information Theory}, Adelaide, Australia, Sept.
  2005, pp. 2310--2314.

\bibitem{Cla68}
R.~H. Clarke, ``A statistical theory of mobile radio reception,'' \emph{Bell
  System Tech. J.}, vol.~47, pp. 957--1000, 1968.

\bibitem{DaiLiu08IT}
W.~Dai, Y.~Liu, and B.~Rider, ``Quantization bounds on {G}rassmann manifolds
  and applications in {MIMO} systems,'' \emph{IEEE Trans. Inform. Theory},
  vol.~54, no.~3, pp. 1108--1123, Mar. 2008.

\bibitem{LeeJin07IT}
J.~Lee and N.~Jindal, ``High {SNR} analysis for {MIMO} broadcast channels:
  Dirty paper coding vs. linear precoding,'' \emph{IEEE Trans. Inform. Theory},
  vol.~53, no.~12, pp. 4787--4792, Dec. 2007.

\bibitem{AuLov07Twc}
C.~K. Au-Yeung and D.~J. Love, ``On the performance of random vector
  quantization limited feedback beamforming in a {MISO} system,'' \emph{IEEE
  Trans. Wireless Commun.}, vol.~6, no.~2, pp. 458--462, Feb. 2007.

\end{thebibliography}

\end{document}